\newtheorem{theorem}{Theorem}[section]
\newtheorem{lemma}{Lemma}[section]
\newtheorem{proposition}{Proposition}
\newcommand\be{\begin{equation}}
\newcommand\ee{\end{equation}}
\newcommand\ber{\begin{eqnarray}}
\newcommand\eer{\end{eqnarray}}
\newcommand\berr{\begin{eqnarray*}}
\newcommand\eerr{\end{eqnarray*}}
\newcommand\bea{\begin{eqnarray}}
\newcommand\eea{\end{eqnarray}}
\newcommand{\dd}{\mathrm{d}}
\newcommand\e{\mathrm{e}}
\begin{document}

\title{Domain Wall Solution Arising in Abelian Higgs Model \\Subject to Born-Infeld Theory of Electrodynamics}
\author{Lei Cao\\School of Mathematics and Statistics\\Henan University\\
Kaifeng, Henan 475004, PR China\\ \\$\mbox{Xiao Chen}^*$\\
 School of Mathematics and Information Science\\Henan University of Economics and Law\\
Zhengzhou, Henan 450016, PR China}
\date{}
\maketitle

\begin{abstract}
In this note we research the Abelian Higgs model subject to the Born-Infeld theory of electrodynamics for which the BPS equations can be reduced into a quasi-linear differential equation. We show that the equation exists a unique solution under two interesting boundary conditions which realize the corresponding phase transition. We construct the solution through a dynamical shooting method for which the correct shooting slope is unique. We also obtain the sharp asymptotic estimate for the solution at infinity.
\end{abstract}

\medskip
\begin{enumerate}

\item[]
{Keywords:} Abelian Higgs theory, Born--Infeld electrodynamics, domain wall, existence and uniqueness, asymptotic estimates.

\item[]
{MSC numbers(2020):} 34B40, 78A30, 81T13.

\end{enumerate}

\section{The minimization problem}\label{s1}

In this work, we concern with the Abelian Higgs model \cite{JT,NO} subject to the Born--Infeld theory \cite{Bor,Born,BL,BLF} of electrodynamics, whose Lagrangian action density reads as
\be\label{1.1}
\mathcal{L}=b^2\left(1-\sqrt{1+\frac{1}{2b^2}F_{\mu\nu}F^{\mu\nu}}\right)+\frac{1}{2}\overline{D_\mu\phi}D^\mu\phi-V(|\phi|^2),
\ee
where $b>0$ is called the Born parameter and $\phi$ is a complex--valued scalar field. $F_{\mu\nu}=\partial_\mu A_\nu-\partial_\nu A_\mu$ the electromagnetic field induced from $A_\mu$, $D_\mu\phi=\partial_\mu\phi-iA_\mu\phi$ the gauge--covariant derivative, and $V\geq 0$ a potential density function. In general, the space time is taken to be $\mathbb{R}^{n,1}$.

Particularly, in the two--dimensional, we are to derive a virial identity of the model \eqref{1.1} for static solution under the temporal gauge $A_0=0$. In this case, the Hamiltonian density of \eqref{1.1} may be calculated to be
\be\label{1.2}
\mathcal{H}=\frac{1}{\beta}\left(\sqrt{1+\frac{\beta}{2}F_{ij}^2}-1\right)+\frac{1}{2}|D_i\phi|^2+V(|\phi|^2),
\ee
where $\beta=\frac{1}{b^2}$, and the Euler--Lagrange equation of \eqref{1.1} are
\bea
D_i D_i\phi&=&2V'(|\phi|^2)\phi,\label{1.3}\\
\partial_j\left(\frac{F_{ij}}{\sqrt{1+\beta F_{12}^2}}\right)&=&\frac{i}{2}(\phi\overline{D_i \phi}-\bar{\phi}D_i\phi),\label{1.4}
\eea
where $i,j=1,2$. By exploring the rescaled fields, $\phi^\lambda(x)=\phi(\lambda x), A_i^\lambda(x)=\lambda A_i(\lambda x)$, we arrive at the following Derrick--Pohozaev type identity
\be\label{1.5}
\int_{\mathbb{R}^2} \Big(\frac{1}{\beta}\Big[\sqrt{1+\beta F_{12}^2}-1\Big]+V(|\phi|^2)\Big)\dd x=\int_{\mathbb{R}^2} \frac{F_{12}^2}{\sqrt{1+\beta F_{12}^2}}\dd x,
\ee
which is the anticipated virial identity for a critical point of the energy.

Using the identity
\be\label{1.6}
|D_1\phi|^2+|D_2\phi|^2=|D_1\phi\pm i D_2\phi|^2\pm i(D_1\phi \overline{D_2\phi}-\overline{D_1\phi}D_2\phi),
\ee
Hamiltonian density \eqref{1.2} can be written as
\ber
\mathcal{H}=&&\frac{\Big(F_{12}\pm \frac{1}{2}\sqrt{1+\beta F_{12}^2}(|\phi|^2-1)\Big)^2}{2\sqrt{1+\beta F_{12}^2}}+\frac{\Big(\sqrt{1+\beta F_{12}^2}\sqrt{1-\frac{\beta}{4} (|\phi|^2-1)^2}-1\Big)^2}{2\beta\sqrt{1+\beta F_{12}^2}}\notag\\
&&-\frac{1}{\beta}\mp\frac{1}{2}F_{12}(|\phi|^2-1)+\frac{1}{\beta}\sqrt{1-\frac{\beta}{4} (|\phi|^2-1)^2}\notag\\
&&+\frac{1}{2}|D_1\phi\pm i D_2\phi|^2\pm\frac{i}{2}(D_1\phi \overline{D_2\phi}-\overline{D_1\phi}D_2\phi)+V(|\phi|^2).\label{1.7}
\eer

Now choose
\be\label{1.8}
V(|\phi|^2)=\frac{1}{\beta}\left(1-\sqrt{1-\frac{\beta}{4} (|\phi|^2-1)^2}\right).
\ee
If $V=0$, there is a spontaneously broken $U(1)$ symmetry as in the formalism of Abelian Higgs theory. For convenience, we assume $V(1)=0$ and $|\phi|^2=\phi_0^2=1$.

Besides, note that $|\phi(x)|\rightarrow 1$ as $|x|\rightarrow\infty$, we know $D_1\phi$ and $D_2\phi$ vanish at infinity rapidly, thus
\be\label{1.9}
\int_{\mathbb{R}^2} \big(i(D_1\phi \overline{D_2\phi}-\overline{D_1\phi}D_2\phi)-|\phi|^2F_{12}\big)\dd x=0,
\ee
which leads us to the following energy lower bound
\bea\label{1.10}
E(\phi,A)&=&\int_{\mathbb{R}^2}\mathcal{H}\dd x\notag\\
&=&\int_{\mathbb{R}^2}\Bigg(\frac{\Big(F_{12}\pm \frac{1}{2}\sqrt{1+\beta F_{12}^2}(|\phi|^2-1)\Big)^2}{2\sqrt{1+\beta F_{12}^2}}+\frac{\Big(\sqrt{1+\beta F_{12}^2}\sqrt{1-\frac{\beta}{4} (|\phi|^2-1)^2}-1\Big)^2}{2\beta\sqrt{1+\beta F_{12}^2}}\notag\\
&&\pm\frac{1}{2}F_{12}+\frac{1}{2}|D_1\phi\pm i D_2\phi|^2\Bigg)\dd x\geq\pm\frac{1}{2}\int_{\mathbb{R}^2}F_{12}\dd x.
\eea
The lower bound is obtained when the following equations are satisfied
\bea
&&F_{12}\pm \frac{1}{2}\sqrt{1+\beta F_{12}^2}(|\phi|^2-1)=0,\label{1.11}\\
&&\sqrt{1+\beta F_{12}^2}\sqrt{1-\frac{\beta}{4} (|\phi|^2-1)^2}-1=0,\label{1.12}\\
&&D_1\phi\pm i D_2\phi=0.\label{1.13}
\eea
It may be examined that \eqref{1.11} implies \eqref{1.12}, these two equations can be compressed into one equation
\be\label{1.14}
F_{12}=\pm\frac{1-|\phi|^2}{2\sqrt{1-\frac{\beta}{4} (|\phi|^2-1)^2}}.
\ee

Using $u=\ln |\phi|^2$ for the equations \eqref{1.13}--\eqref{1.14}, Yang \cite{Yang} established the existence and uniqueness theory for its $N$--vortex solution subject to the boundary condition $u(\pm\infty)=0$ corresponding to $|\phi(\pm\infty)|^2=1$. As far as we know, there are no results for non-topological boundary conditions.  In particular, when $\beta=0$, we get the classical Abelian Higgs theory \cite{JT}, we may also refer to the Taubes equation \cite{Ma}.

The problem \eqref{1.11}--\eqref{1.13} is of outstanding interest in one--dimensional which produces domain walls. We now pursue a domain--wall structure contained in the Abelian Higgs model subject to the Born--Infeld theory of electrodynamics. For this purpose, we assume the fields $\phi$ and $A_i$ only depend on $x^1=x$, and take the ansatz \cite{Bolo}
\be\label{1.15}
A_1=0,~~A_2=a(x),~~\phi=f(x)=\mbox{real}.
\ee
Then \eqref{1.13} becomes $f'\pm af=0$. In the nontrivial situation, $f$ never vanishes. Without loss of generality, we may assume $f>0$. Therefore, we see that $a=\mp(\ln f)'$. Moreover, we have $F_{12}=a'$. Substituting these into equation \eqref{1.14} and let $u=2\ln f$, we get the self-dual domain--wall equation
\be\label{2.1}
u''=\frac{\e^u-1}{\sqrt{1-\frac{\beta}{4}(\e^u-1)^2}}.
\ee
Following \cite{Bolo}, boundary conditions of interest describing relevant phase transition phenomena include Higgs to magnetic phase
\be\label{2.2}
u(-\infty)=0,\quad u(\infty)=-\infty,
\ee
and magnetic to magnetic phase
\ber\label{2.3}
u(-\infty)=u(\infty)=-\infty.
\eer
When $\beta=0$, equation \eqref{2.1} is a one--dimensional Liouville type equation \cite{Lio} and we have studied in \cite{Cao}.

In this work, we shall show the existence and uniqueness of equation \eqref{2.1} with boundary conditions \eqref{2.2} and \eqref{2.3} respectively. We will also obtain the sharp asymptotic estimate for the solution $u(x)$ at infinity.

\section{Existence and uniqueness of an domain wall}
\setcounter{equation}{0}

We are going to solve the two--point boundary value problems consisting of \eqref{2.1} and two different boundary conditions \eqref{2.2} and \eqref{2.3} over the full interval $(-\infty,\infty)$. For this purpose, we approach a dynamical shooting method.

Our main results are stated as follows.

\begin{theorem}\label{th2.1}
Suppose the parameter $\beta<4$, the two--point boundary value problem consisting of equation \eqref{2.1} and two relevant boundary conditions \eqref{2.2} and \eqref{2.3} over the interval $(-\infty,\infty)$  has a unique solution
$u(x)$ which enjoys the following properties.

\begin{enumerate}
\item[(i)] Under the boundary condition \eqref{2.2}, solution $u$ strictly decreases such that
$u(x)<0$ for $x\in (-\infty,\infty)$ and enjoys the sharp boundary estimates given by \label{2.20}
\ber
u(x)&=&-\frac{x^2}{2}+\mbox{\rm O}(1),~~x\to \infty,\label{2.15}\\
u(x)&=&\mbox{\rm O}(\e^{-|x|}),~~x\to -\infty.\label{2.18}
\eer

\item[(ii)] Under the boundary condition \eqref{2.3}, for any given point $x_0$ with $u(x_0)\leq 0$, $u$ attains its unique global maximal value $u_0$ at $x=x_0$, which is given by the first-order equation \eqref{2.19}, and enjoys the sharp asymptotic behavior
    \be
    u(x)=-\frac{x^2}{2}+\mbox{\rm O}(1),~~x\to \pm\infty.
    \ee
\end{enumerate}
\end{theorem}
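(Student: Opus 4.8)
The plan is to treat the ODE $u'' = g(u)$ with $g(u) = (\e^u-1)/\sqrt{1-\tfrac{\beta}{4}(\e^u-1)^2}$ as a conservative one-dimensional dynamical system and run a shooting argument. First I would record the elementary facts about $g$: on the range of interest $u \le 0$ (so $\e^u - 1 \in (-1,0]$) and for $\beta < 4$ the radicand $1-\tfrac{\beta}{4}(\e^u-1)^2$ stays in $(1-\tfrac{\beta}{4},1] \subset (0,1]$, hence $g$ is smooth, $g(0)=0$, $g(u) < 0$ for $u < 0$, and $g'(0) = 1 > 0$; thus $u\equiv 0$ is the only rest point and it is a saddle for the planar system $(u,v=u')$. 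Multiplying \eqref{2.1} by $u'$ gives the first integral
\be
\tfrac12 (u')^2 = G(u) + C, \qquad G(u) := \int_0^u g(s)\,\dd s = \frac{2}{\beta}\Big(1 - \sqrt{1-\tfrac{\beta}{4}(\e^u-1)^2}\Big) + (\text{elementary terms}),
\ee
and one checks $G(0)=0$, $G'(u)=g(u)<0$ for $u<0$, so $G$ is positive and strictly decreasing on $(-\infty,0)$, with $G(-\infty)=+\infty$ when $\beta<4$ (since the square root stays bounded, $G$ grows like $-u$ linearly — this is exactly what produces the $-x^2/2$ behavior). This energy identity is the backbone of everything that follows.

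For part (i), the Higgs-to-magnetic wall: the condition $u(-\infty)=0$, $u'(-\infty)=0$ forces $C=0$, so a solution decreasing from $0$ must satisfy $u' = -\sqrt{2G(u)}$ on its whole existence interval. I would construct it by shooting: fix a base point, say solve $u' = -\sqrt{2G(u)}$ with $u(0) = u_*$ for arbitrary $u_* < 0$, which is locally well-posed since $G > 0$ there, and show the trajectory extends to all of $\bfR$, monotonically decreasing, with $u \to 0$ as $x \to -\infty$ (it approaches the saddle along the stable manifold; the linearization $g'(0)=1$ gives the exponential rate $\e^{-|x|}$, yielding \eqref{2.18}) and $u \to -\infty$ in finite "vertical" but infinite horizontal time as $x\to+\infty$. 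Different choices of $u_*$ give the same orbit up to translation, and the translation is pinned by requiring the specific normalization implicit in the boundary data, giving uniqueness. The asymptotics \eqref{2.15} come from $u' = -\sqrt{2G(u)} \sim -\sqrt{-2u}\cdot(1+o(1))$... more precisely, since $\sqrt{1-\tfrac\beta4(\e^u-1)^2} \to \sqrt{1-\tfrac\beta4}$ as $u \to -\infty$, we get $g(u) \to -1/\sqrt{1-\tfrac\beta4}$, so $u'' \to$ const; integrating twice gives $u(x) = -\tfrac12 c\, x^2 + \mathrm{O}(x)$ with $c = 1/\sqrt{1-\beta/4}$ — I should double-check whether the stated coefficient $\tfrac12$ already absorbs a normalization or whether the paper intends $\beta \to 0$; in any case the method is: substitute the ansatz $u = -\tfrac c2 x^2 + w(x)$ and show $w = \mathrm{O}(1)$ by a Gronwall/bootstrap estimate on the equation for $w$, using that $g(u) + c \to 0$ exponentially fast in $|u|$ hence super-polynomially in $x$.

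For part (ii), the magnetic-to-magnetic wall with $u(\pm\infty) = -\infty$: now the solution has an interior maximum $u_0 = u(x_0) < 0$ (it cannot touch $0$, else by uniqueness for the saddle it would be the heteroclinic of part (i) or the constant $0$). At the max, $u'(x_0) = 0$, so the energy constant is $C = -G(u_0) < 0$, giving the implicit first-order relation
\be\label{2.19}
\tfrac12 (u')^2 = G(u) - G(u_0),
\ee
valid for $u \le u_0$, with $u' = \mp\sqrt{2(G(u)-G(u_0))}$ on the two sides of $x_0$. Since $G$ is a strict decreasing bijection from $(-\infty, u_0]$ onto $[G(u_0), +\infty)$, this ODE is solved by quadrature and, for each choice of $u_0 < 0$ and each choice of $x_0$, yields a unique global solution symmetric about $x_0$, decreasing to $-\infty$ on both sides; the same $x^2/2$ asymptotics follow exactly as in part (i) since the $G(u_0)$ shift is lower-order at $-\infty$. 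Uniqueness given the prescribed $x_0$ and the value $u_0$ (determined by \eqref{2.19} together with whatever additional normalization the boundary problem carries) is then immediate from ODE uniqueness.

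The main obstacle I anticipate is not the local existence or the energy bookkeeping but two global/technical points: (a) verifying that the shooting trajectory is \emph{global} — that $u'$ does not blow up and $u$ does not reach $0$ with nonzero derivative in finite time — which requires the a priori bound that $G$ grows only linearly at $-\infty$ (this is precisely where $\beta < 4$ is used, to keep the radicand away from $0$ so $g$ stays bounded); and (b) upgrading the crude asymptotic $u \sim -cx^2/2$ to the \emph{sharp} form with an $\mathrm{O}(1)$ — not $\mathrm{o}(x^2)$ — remainder, which needs a careful integration of $u'(x) = -\sqrt{2G(u(x))}$ using $G(u) = c\,|u| + \mathrm{O}(\e^{-\delta|u|})$ for a precise $\delta$, followed by a fixed-point or Gronwall argument controlling the accumulated error. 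I would isolate (b) as a separate lemma on the asymptotics of solutions of $u'' = g(u)$ with $g$ bounded and $g \to$ const at $-\infty$, and cite the corresponding $\beta = 0$ computation in \cite{Cao} as a template.
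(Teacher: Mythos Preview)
Your plan is correct and would work, but it departs from the paper's argument in one structural respect. For existence and uniqueness under \eqref{2.2} the paper runs a genuine shooting argument on the \emph{second}-order problem: it fixes an initial height $u(0)$, lets the slope $b=-u'(0)$ vary, defines three disjoint shooting sets $\mathcal{B}^{\pm},\mathcal{B}^{0}\subset[0,\infty)$ according to the qualitative behaviour of the trajectory on the half-line $t=-x>0$, shows $\mathcal{B}^{\pm}$ are open and nonempty (Lemma~\ref{d.2}), concludes $\mathcal{B}^{0}\neq\varnothing$ by connectedness, and then proves $\mathcal{B}^{0}$ is a singleton via the maximum principle applied to the difference of two candidates (Lemma~\ref{d.4}). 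You instead pass immediately to the first integral $\tfrac12(u')^2=G(u)$, recognise the wall as the stable manifold of the saddle at the origin, and build it from the separable first-order equation $u'=-\sqrt{2G(u)}$; this is shorter and makes the phase portrait---hence uniqueness up to translation---transparent, while the paper's shooting machinery is heavier but would survive non-autonomous perturbations that destroy the conserved energy. For the asymptotics the two approaches coincide: the paper also multiplies \eqref{2.1} by $u'$ and integrates to reach \eqref{2.9}--\eqref{2.11}, then integrates the resulting differential inequalities. Your side remark about the leading coefficient is well founded: since $g(u)\to -2/\sqrt{4-\beta}$ as $u\to-\infty$, the true quadratic rate is $-x^2/\sqrt{4-\beta}$, and indeed the paper's own two-sided bound \eqref{2.14} only traps $u$ between $-x^2/2$ and $-x^2/\sqrt{4-\beta}$, which collapse to the stated sharp form \eqref{2.15} only at $\beta=0$.
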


Obviously, any solution $u$ of function \eqref{2.1} with boundary conditions \eqref{2.2} or \eqref{2.3} must satisfy $u\leq 0$. Otherwise, we may assume $u>0$ somewhere then there exists a point $x_0\in\mathbb{R}$ where $u$ attain its local maximum in $\mathbb{R}$ and $u''(x_0)\leq 0$, which is impossible in view of \eqref{2.1}.

In particular, any solution $u$ of \eqref{2.1} satisfying \eqref{2.2} must be negative-valued. In fact, note that $u(-\infty)=0$, if $u=0$ at a point $x_1\in\mathbb{R}$, then $u\equiv 0$ on $(-\infty,x_1)$ or $u<0$ at some point belongs to $(-\infty,x_1)$. For the first case, taking any point $x_2\in (-\infty,x_1)$, clearly $u'(x_2)=0$ and $u(x_2)=0$. Applying the uniqueness theorem for the initial value problems of an ordinary differential equation we get $u\equiv 0$ throughout $\mathbb{R}$ which violates the condition $u(\infty)=-\infty$. For the later case $u<0$, we can obtain a minimum at some point $x_3\in(-\infty,x_1)$, then $u''(x_3)\geq 0$ contradicts with \eqref{2.1}.

Further more, there must hold $1-\frac{\beta}{4}(\e^u-1)^2>0$ to make sense of the right-hand side of equation \eqref{2.1}. In other words function \eqref{2.1} with boundary conditions \eqref{2.2} or \eqref{2.3} have solutions only when $\beta< 4$.

We first study the function \eqref{2.1} with boundary \eqref{2.2}. In order to solve this problem, considering the initial value problem
\be\label{2.4}
u''=\frac{\e^u-1}{\sqrt{1-\frac{\beta}{4}(\e^u-1)^2}},\quad u(0)=a,\quad u'(0)=-b,\quad a,b>0.
\ee
where $a,b$ are constants and $b$ is an initial slope. Note the autonomous of the equation, the choice of the initial point $x=0$ is arbitrary. We shall use a dynamical shooting method to approach the problem.

\begin{proposition}\label{pro2.1}
For any fixed $a>0$ there exists a unique $b>0$ such that the unique solution of the initial value problem \eqref{2.4} solves the boundary value problem \eqref{2.1}--\eqref{2.2}.
\end{proposition}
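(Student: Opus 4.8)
The plan is to run a shooting argument with the initial slope $b$ as the parameter. Write $g(u)=\frac{\e^u-1}{\sqrt{1-\frac{\beta}{4}(\e^u-1)^2}}$ for the right--hand side of \eqref{2.1} and let $u_b$ denote the maximal solution of the initial value problem \eqref{2.4}. I will argue that the set $b\in(0,\infty)$ of admissible slopes breaks into three pieces according to the behaviour of $u_b$ as $x\to-\infty$: it may \emph{turn back} before reaching the level $0$, it may \emph{run into the degeneracy} of \eqref{2.1}, or it may do neither; I will show the first two sets are open intervals, that the single remaining value of $b$ makes $u_b$ solve the boundary value problem \eqref{2.1}--\eqref{2.2}, and that this $b$ is unique because of the conserved energy of \eqref{2.1} (or, alternatively, because of monotone dependence on $b$).

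First I would collect the elementary facts I need. For $\beta<4$ and $u\le 0$ one has $\e^u-1\in(-1,0]$, hence the radicand $1-\frac{\beta}{4}(\e^u-1)^2$ lies in $(1-\frac{\beta}{4},1]$ and is bounded away from $0$; so $g$ is smooth and strictly negative on $(-\infty,0)$, with $g(0)=0$, $g(u)=u+\mathrm O(u^2)$ near $0$, $g(u)\to-(1-\frac{\beta}{4})^{-1/2}$ as $u\to-\infty$, and $g'>0$ wherever $g$ is defined (a one--line computation), so $g$ is strictly increasing; moreover $g$ is singular only at $u_{\star}:=\ln(1+2/\sqrt{\beta})>0$ (with $u_\star=+\infty$ if $\beta=0$), where $g\to+\infty$. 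Multiplying \eqref{2.1} by $u'$ shows $\mathcal E:=\frac{1}{2}(u')^2-G(u)$, with $G(u):=\int_0^u g$, is constant along solutions; here $G>0$ is strictly decreasing on $(-\infty,0)$, $G(0)=0$, $G(u)=\frac{1}{2}u^2+\mathrm O(u^3)$ near $0$, $G(u)\to+\infty$ as $u\to-\infty$, while $G(u_\star)<\infty$. Finally, for every $b>0$ the solution $u_b$ stays strictly below its value at the origin for $x>0$ (there $g(u_b)<0$ keeps $u_b'<0$ and $u_b''$ bounded), so $u_b$ exists and is strictly decreasing for all $x\ge 0$ with $u_b(x)\to-\infty$ as $x\to\infty$; thus $u(\infty)=-\infty$ in \eqref{2.2} is automatic and only the limit as $x\to-\infty$ has to be arranged.

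To the left of the origin $u_b$ increases, and I would classify its fate by the constant $\mathcal E_b=\frac{1}{2}b^2-G(u_b(0))$. If $\mathcal E_b<0$, then at the level $u_{\#}\in(u_b(0),0)$ with $G(u_{\#})=-\mathcal E_b$ the slope $u_b'$ vanishes; since $g(u_{\#})<0$ this is a nondegenerate maximum, past which $u_b$ decreases again with $u_b\to-\infty$ as $x\to-\infty$ -- a solution of type \eqref{2.3}, not \eqref{2.2}. If $\mathcal E_b>0$, then $(u_b')^2=2(\mathcal E_b+G(u_b))\ge 2\mathcal E_b>0$, so $u_b'$ never vanishes, $u_b$ crosses the level $0$ with nonzero slope and keeps increasing, and it reaches $u_{\star}$ at a \emph{finite} station $x_1$ because $\int\dd u/\sqrt{2(\mathcal E_b+G(u))}$ converges near $u_\star$ (using $G(u_\star)<\infty$); at $x_1$ equation \eqref{2.1} degenerates and $u_b$ cannot be continued, so $u_b$ is not defined on all of $(-\infty,\infty)$. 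If $\mathcal E_b=0$, then $u_b'=-\sqrt{2G(u_b)}$ with $u_b<0$ at every finite $x$ (else $u_b\equiv 0$ by uniqueness for \eqref{2.1}), and since $G(u)\sim\frac{1}{2}u^2$ near $0$ the integral $\int\dd u/\sqrt{2G(u)}$ diverges at $u=0$, forcing $u_b(x)\to 0^-$ as $x\to-\infty$ (at the exponential rate $\mathrm O(\e^{x})$ dictated by the linearization $u''=u$ at the hyperbolic rest point $u=0$). Only this last case produces a solution of \eqref{2.1}--\eqref{2.2}, and it occurs precisely when $b=b^{\star}:=\sqrt{2\,G(u_b(0))}$.

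It remains to see $b^{\star}$ is the \emph{only} admissible slope. The sets $\{\mathcal E_b<0\}=(0,b^{\star})$ and $\{\mathcal E_b>0\}=(b^{\star},\infty)$ are open -- this can also be read off from continuous dependence on initial data, a turning point being a nondegenerate maximum and the finite--station degeneracy being stable under a small change of $b$ -- so their complement in $(0,\infty)$ is exactly $\{b^{\star}\}$. For the uniqueness phrased intrinsically: any solution $u$ of \eqref{2.1}--\eqref{2.2} is negative by the remark above, hence $u''=g(u)<0$ everywhere, and together with the boundary data this forces $u'<0$, so $u$ is strictly decreasing with $u\to 0$ as $x\to-\infty$; then $\mathcal E$ must vanish (if $\mathcal E>0$ then $|u'|\ge\sqrt{2\mathcal E}$ and $u$ would escape linearly; if $\mathcal E<0$ then $G(u)<|\mathcal E|$ near $u=0$ makes $(u')^2<0$), whence $u'=-\sqrt{2G(u)}$, a first--order autonomous equation with prescribed $u(0)$ and unique solution, so $b=-u'(0)=\sqrt{2\,G(u(0))}$ is determined. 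Alternatively, avoiding the energy, $b\mapsto u_b(x)$ is strictly increasing for each fixed $x<0$ by comparison for $u''=g(u)$ with $g$ increasing, so if $b_1<b_2$ were both admissible then $w:=u_{b_2}-u_{b_1}$ would be positive on $(-\infty,0)$ with $w(0)=0$ and $w(x)\to 0$ as $x\to-\infty$, contradicting $w''=g(u_{b_2})-g(u_{b_1})>0$ at an interior maximum of $w$. I expect the main obstacle to be the $\mathcal E_b>0$ case, namely excluding that $u_b$ attains the level $u_{\star}$ only in the limit $x\to-\infty$ rather than at a finite $x_1$ -- this is exactly the integrability estimate for $\int\dd u/\sqrt{2(\mathcal E_b+G(u))}$ near $u_{\star}$, which also underlies the openness of the ``degeneracy'' set -- together with the sharp exponential rate $u_b(x)=\mathrm O(\e^{x})$ in the borderline case, which needs a genuine if routine linearization argument.
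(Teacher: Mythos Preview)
Your argument is correct, but it proceeds quite differently from the paper. The paper runs a purely topological shooting method: after reflecting to $t=-x$ it partitions $[0,\infty)$ into three sets $\mathcal{B}^{-},\mathcal{B}^{0},\mathcal{B}^{+}$ according to whether $u'$ changes sign, stays positive with $u<0$, or stays positive with $u$ crossing $0$; it then shows $\mathcal{B}^{\pm}$ are open and nonempty (Lemma~\ref{d.2}), so $\mathcal{B}^{0}\neq\varnothing$ by connectedness, and finally proves $\mathcal{B}^{0}$ is a singleton via the maximum principle on the difference of two candidate solutions (Lemma~\ref{d.4}). Your route instead exploits the first integral $\mathcal{E}=\tfrac12(u')^2-G(u)$ from the outset and classifies trajectories algebraically by the sign of $\mathcal{E}_b=\tfrac12 b^2-G(-a)$, which collapses the whole trichotomy to the single equation $b^{\star}=\sqrt{2G(-a)}$; your three cases $\mathcal{E}_b\lessgtr 0,\ \mathcal{E}_b=0$ coincide with the paper's $\mathcal{B}^{-},\mathcal{B}^{+},\mathcal{B}^{0}$, but you never need openness or continuous dependence. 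What you gain is an explicit formula for the correct slope and a self-contained existence proof; the paper's version of nonemptiness of $\mathcal{B}^{+}$ actually leans on the decay estimate of Proposition~\ref{pro2.2}, which is established only afterwards, so your energy argument is in this respect cleaner. For uniqueness you offer both the energy argument and, as an alternative, the same maximum-principle comparison the paper uses. One small remark: your treatment of the case $\mathcal{E}_b>0$ (finite-station degeneracy at $u_{\star}$) is more than the paper needs, since membership in $\mathcal{B}^{+}$ only requires $u$ to cross zero, which already rules out $u(-\infty)=0$; but the integrability estimate you flag is correct and harmless.
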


We first concentrate on the half $x<0$. Now for convenience we let $t=-x$. Then in the region $x<0$ the system \eqref{2.4} becomes
\be\label{2.5}
u''=\frac{\e^u-1}{\sqrt{1-\frac{\beta}{4}(\e^u-1)^2}},\quad u(0)=a,\quad u'(0)=b,\quad a,b>0.
\ee

In order to prove the theorem, we define the following shooting sets
\ber
{\mathcal{B}}^{-}&=&\big\{b\geq 0~|~u'(t)<0~\text{for some}~t\geq 0 \big\},\notag\\
{\mathcal{B}}^{0}&=&\big\{b\geq 0~|~u'(t)>0\mbox{ and } u(t)<0~\text{for all}~t\geq 0 \big\},\notag\\
{\mathcal{B}}^{+}&=&\big\{b\geq 0~|~u'(t)>0~\text{for all}~t\geq 0\mbox{ and } u(t)>0~\text{for some}~t\geq 0\big\},\notag
\eer
in which, the solution $u(x)$ need not to exist for all $x$, the statement are made to mean wherever the solution exists.

\begin{lemma}\label{d.1}
There hold
${\mathcal{B}}^{+}\cap{\mathcal{B}}^{-}={\mathcal{B}}^{+}\cap{\mathcal{B}}^{0}
={\mathcal{B}}^{-}\cap{\mathcal{B}}^{0}=\varnothing$ and $[0,\infty)={\mathcal{B}}^{+}\cup{\mathcal{B}}^{0}\cup{\mathcal{B}}^{-}$.
\end{lemma}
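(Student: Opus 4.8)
The plan is to treat the two assertions separately: the pairwise disjointness is essentially formal, while the exhaustion $[0,\infty)=\mathcal B^+\cup\mathcal B^0\cup\mathcal B^-$ carries the content. For disjointness I would just compare the defining clauses. A $b$ lying in $\mathcal B^-$ and in $\mathcal B^0$ or $\mathcal B^+$ would make $u'$ negative at some $t$ and positive at every $t$; a $b$ in both $\mathcal B^0$ and $\mathcal B^+$ would make $u$ negative at every $t$ and positive at some $t$. Each is absurd, so the three sets meet pairwise in $\varnothing$, with no use of the equation.

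For the exhaustion I would fix $b\ge 0$, write $u=u(\cdot;b)$ for the maximal solution of \eqref{2.5} (with $u(0)<0$, as for any boundary-value solution), and abbreviate the right-hand side by $g(u)=(\e^u-1)/\sqrt{1-\beta(\e^u-1)^2/4}$, recording that $\operatorname{sgn}g(u)=\operatorname{sgn}u$, that $g$ is smooth and bounded wherever $u\le 0$ (for $\beta<4$ one has $|\e^u-1|<1<2/\sqrt\beta$ there, so the radicand stays bounded away from $0$), and that $g$ is singular only as $u\uparrow\ln(1+2/\sqrt\beta)$. Then I split on the sign of $u'$. If $u'<0$ somewhere on the existence interval, then $b\in\mathcal B^-$. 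Otherwise $u'\ge 0$ throughout, and the key sub-claim is that in fact $u'>0$ throughout: at a would-be first zero $t^*$ of $u'$ one has $u(t^*)\ne 0$ by uniqueness for \eqref{2.5} (else $u\equiv 0$, contradicting the initial condition), and the sign law on $g$ forces $u'$ to change sign at $t^*$ — it is negative just before $t^*$ when $u(t^*)>0$ (impossible, as $u'>0$ on $[0,t^*)$) and negative just after $t^*$ when $u(t^*)<0$ (so $b\in\mathcal B^-$ again); the borderline case $b=0$ (where $t^*=0$) is handled the same way via $u''(0)=g(u(0))<0$. Hence, whenever $b\notin\mathcal B^-$, $u'>0$ on the whole existence interval, so $u$ increases strictly from $u(0)<0$, and one of two things happens: $u$ stays negative — then $(u,u')$ is confined to the bounded set $\{u(0)\le u\le 0,\ 0<u'\le b\}$ (using $u''=g(u)<0$ to bound $u'$ by $b$), so the solution is global and $b\in\mathcal B^0$ — or $u$ reaches $0$ and turns positive, so $b\in\mathcal B^+$, the solution now persisting only until $u\uparrow\ln(1+2/\sqrt\beta)$, in accordance with the convention ``wherever the solution exists''. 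Thus every $b\in[0,\infty)$ lies in exactly one of the three sets.

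The step I expect to be the crux is the sub-claim that $u'$ cannot merely graze $0$ from above and stay nonnegative: this is precisely where one uses the autonomy of \eqref{2.5}, the sign law $\operatorname{sgn}g(u)=\operatorname{sgn}u$, and ODE uniqueness (to exclude $u(t^*)=0$). Everything else — disjointness, the monotonicity step, the confinement yielding global existence in the $\mathcal B^0$ case, and the identification of $\ln(1+2/\sqrt\beta)$ as the only obstruction to continuation — is routine. A cleaner but equivalent route uses the conserved quantity $\tfrac12(u')^2-\int_0^u g(s)\,\dd s$, whose value $\mathcal E=\tfrac12 b^2-\int_0^{u(0)}g(s)\,\dd s$ being negative, zero, or positive (note $\int_0^u g>0$ for $u\ne 0$) corresponds directly to $b\in\mathcal B^-$, $b\in\mathcal B^0$, $b\in\mathcal B^+$.
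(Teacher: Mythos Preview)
Your proposal is correct and follows essentially the same route as the paper: disjointness is dismissed as formal, and for exhaustion both of you assume $b\notin\mathcal B^-$ (so $u'\ge 0$), then rule out a zero $t^*$ of $u'$ by using ODE uniqueness to exclude $u(t^*)=0$ and the sign of $u''(t^*)=g(u(t^*))$ to force $u'<0$ on one side of $t^*$. Your write-up is more careful than the paper's (you handle $b=0$ explicitly, justify global existence in the $\mathcal B^0$ case via confinement, and flag the blow-up threshold $u=\ln(1+2/\sqrt\beta)$), and your closing energy-level remark gives a genuinely cleaner alternative that the paper does not use here; but the core argument is the same.
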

\begin{proof}
Clearly, $\mathcal{B}^{-}$, $\mathcal{B}^{0}$,  and $\mathcal{B}^{+}$ are disjoint.
If $b\geq 0$ but $b\notin\mathcal{B}^{-}$, then $u'(t)\geq0$ for all $t\geq 0$.  Assume there is a point $t_0\geq 0$ so that $u'(t_0)=0$. Then $\e^{u(t_0)}-1\neq 0$, otherwise $u(t_0)=0$ which means $u$ arrives
at an equilibrium of the differential equation at $t_0$, which violates the uniqueness theorem for solutions to initial value problems of ordinary differential equations.
So we have $u''(t_0)>0$ or $u''(t_0)<0$ in view of $u'(t_0)=0$ and $\e^{u(t_0)}-1\neq 0$ at $t=t_0$. Therefore,
there holds $u'(t)<0$ for $t$ close to $t_0$ but $t<t_0$ if $u''(t_0)>0$ or $t>t_0$ if $u''(t_0)<0$, contradicting the assumption $b\notin\mathcal{B}^{-}$. Thus, $b\notin\mathcal{B}^{-}$ implies $u'(t)>0$ for all $t$. In other words, $b\in \mathcal{B}^{0}\cup\mathcal{B}^{+}$, therefore, $[0,\infty)={\mathcal{B}}^{+}\cup{\mathcal{B}}^{0}\cup{\mathcal{B}}^{-}$.
\end{proof}

\begin{lemma}\label{d.2}
The sets $\mathcal{B}^{+}$ and $\mathcal{B}^{-}$ are both open and nonempty.
\end{lemma}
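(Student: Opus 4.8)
\emph{Strategy.} The plan is to deduce the two openness assertions from $C^{1}$-dependence of the solution $u(t)=u(t;b)$ of \eqref{2.5} on the shooting parameter $b$, and the two nonemptiness assertions by exhibiting explicit admissible values of $b$. Observe first that the right-hand side $g(u):=\dfrac{\e^{u}-1}{\sqrt{1-\frac{\beta}{4}(\e^{u}-1)^{2}}}$ of \eqref{2.5} is $C^{\infty}$ on the interval $\{u<u_{*}\}$, $u_{*}:=\ln(1+2/\sqrt{\beta})$, which, since $\beta<4$, contains all of $(-\infty,0]$; moreover $|g(u)|<M_{0}:=\frac{2}{\sqrt{4-\beta}}$ whenever $u\le 0$, because then $|\e^{u}-1|<1$ and $1-\frac{\beta}{4}(\e^{u}-1)^{2}>1-\frac{\beta}{4}$. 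Hence local existence, uniqueness and continuous dependence on $b$ hold on any time interval on which $u$ stays in a compact subinterval of $(-\infty,u_{*})$, and a solution that stays $\le 0$ exists for all $t\ge 0$.

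\emph{Openness.} For $\mathcal{B}^{-}$: let $b_{0}\in\mathcal{B}^{-}$ and pick $t_{0}$ with $u'(t_{0};b_{0})<0$. Since $u'(0;b_{0})=b_{0}\ge 0$, the function $u'(\cdot;b_{0})$ has a first zero in $[0,t_{0}]$, at which $u''\le 0$ and hence, arguing as in the proof of Lemma \ref{d.1} (an interior zero of $g$ would be an equilibrium, forbidden by uniqueness), $u<0$; therefore $u(\cdot;b_{0})<0$ on all of $[0,t_{0}]$, so the solution stays in a compact subinterval of $(-\infty,u_{*})$ there, and continuous dependence gives $u'(t_{0};b)<0$ for $b$ near $b_{0}$, i.e. $b\in\mathcal{B}^{-}$. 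For $\mathcal{B}^{+}$: let $b_{0}\in\mathcal{B}^{+}$ and pick $t_{1}$ with $u(t_{1};b_{0})>0$; on $[0,t_{1}]$ the solution increases and $m:=\min_{[0,t_{1}]}u'(\cdot;b_{0})>0$, so continuous dependence yields $u(t_{1};b)>0$ and $u'(\cdot;b)\ge m/2>0$ on $[0,t_{1}]$ for $b$ close to $b_{0}$. To propagate $u'>0$ beyond $t_{1}$ I would run a barrier argument: if $t_{2}>t_{1}$ were the first zero of $u'(\cdot;b)$, then $u$ increases on $[t_{1},t_{2}]$, so $u(t_{2};b)>0$ and hence $u''(t_{2};b)=g(u(t_{2};b))>0$, whereas $u'$ decreasing to $0$ at $t_{2}$ forces $u''(t_{2};b)\le 0$, a contradiction. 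Thus $u'(\cdot;b)>0$ throughout its interval of existence and $b\in\mathcal{B}^{+}$.

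\emph{Nonemptiness.} For $\mathcal{B}^{-}$ the simplest witness is $b=0$: then $u'(0)=0$ and $u''(0)=g(u(0))<0$ since $u(0)<0$, so $u'<0$ just after $t=0$; equivalently any sufficiently small $b>0$ lies in $\mathcal{B}^{-}$ by the continuity already used. For $\mathcal{B}^{+}$ I would take $b$ large and argue by contradiction. If $b\notin\mathcal{B}^{+}$ then by Lemma \ref{d.1} $b\in\mathcal{B}^{-}\cup\mathcal{B}^{0}$, and in either case $u(t)<0$ for all $t\ge 0$ (for $\mathcal{B}^{0}$ by definition; for $\mathcal{B}^{-}$ because $u$ increases up to the first zero of $u'$, which lies in $\{u<0\}$, and decreases thereafter). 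Then $u''(t)\ge -M_{0}$ for all $t$, so $u'(t)\ge b-M_{0}t>0$ on $[0,b/M_{0})$, and integrating gives $u(b/M_{0})\ge u(0)+\frac{b^{2}}{2M_{0}}$, which is positive once $b>\sqrt{2M_{0}\,|u(0)|}$, contradicting $u<0$. Hence every such large $b$ lies in $\mathcal{B}^{+}$, and both sets are nonempty.

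\emph{Main obstacle.} The delicate step is the openness of $\mathcal{B}^{+}$: continuous dependence only supplies information on a compact time interval (positivity of $u$ at one instant, positivity of $u'$ up to it), while membership in $\mathcal{B}^{+}$ is a condition for \emph{all} $t$; bridging the gap is precisely the barrier argument above, which works only because $u>0$ renders the right-hand side of \eqref{2.5} strictly positive. A subsidiary point to watch throughout is the degeneration of $g$ as $u\uparrow u_{*}$: before each invocation of continuous dependence one must check that the solutions involved stay in a compact subinterval of $(-\infty,u_{*})$, which holds here because on the relevant intervals $u$ is either negative or increasing toward a finite value.
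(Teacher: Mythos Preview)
Your argument is correct and, in two places, cleaner than the paper's own proof. The overall architecture---continuous dependence for openness, explicit witnesses for nonemptiness---is the same, but the execution differs in substance.

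For nonemptiness of $\mathcal{B}^{+}$, the paper integrates \eqref{2.5} to obtain \eqref{2.6}--\eqref{2.7} and then \emph{assumes} $u(t)=\mathrm{O}(\e^{-t})$ as $t\to\infty$ (deferring justification to Proposition~\ref{pro2.2}) in order to conclude that the integrals converge and hence that large $b$ forces $u(t)>0$. This is awkwardly circular, since the decay estimate in Proposition~\ref{pro2.2} is proved only for the solution with $b\in\mathcal{B}^{0}$, whose existence is exactly what the present lemma is building toward. Your route---bounding $|g(u)|<M_{0}=2/\sqrt{4-\beta}$ uniformly on $\{u\le 0\}$, so that $u''\ge -M_{0}$ and $u(b/M_{0})\ge u(0)+b^{2}/(2M_{0})>0$ for $b>\sqrt{2M_{0}\,|u(0)|}$---is elementary, quantitative, and avoids the forward reference entirely.

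For openness of $\mathcal{B}^{+}$, the paper invokes continuous dependence and asserts $u'(t;b)>0$ for \emph{all} $t\ge 0$ directly, which continuous dependence alone cannot give (it only controls compact time intervals). You correctly isolate this as the delicate step and supply the missing barrier: once $u(t_{1};b)>0$ and $u'>0$ on $[0,t_{1}]$, a first later zero $t_{2}$ of $u'$ would satisfy $u(t_{2})>0$, hence $u''(t_{2})=g(u(t_{2}))>0$, while $u'$ reaching $0$ from above forces $u''(t_{2})\le 0$. (Equivalently and even more directly: $u>0$ on $[t_{1},t_{2})$ makes $u''>0$ there, so $u'$ is increasing and cannot vanish.) Your attention to keeping the trajectory inside a compact subinterval of $(-\infty,u_{*})$ before each appeal to continuous dependence is also a point the paper leaves implicit.

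The remaining parts---$b=0\in\mathcal{B}^{-}$ and openness of $\mathcal{B}^{-}$ via continuous dependence at a time where $u'<0$---match the paper's treatment.
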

\begin{proof}
We first show that $\mathcal{B}^{+}$ is open and nonempty. Integrating the equation in \eqref{2.5} gives us
\ber
u'(t)&=&b+\int_0^t\frac{\e^{u(\tau)}-1}{\sqrt{1-\frac{\beta}{4}(\e^{u(\tau)}-1)^2}}\dd\tau,\label{2.6}\\
u(t)&=&-a+bt+\int_0^t\int_0^s\frac{\e^{u(\tau)}-1}{\sqrt{1-\frac{\beta}{4}(\e^{u(\tau)}-1)^2}}\dd\tau\dd s.\label{2.7}
\eer

Suppose $u(t)=\mbox{\rm O}(\e^{-t})\mbox{ as }t\to \infty$ (we will show the rationality of the hypothesis in proposition \ref{pro2.2}), then the integral on the right-hand-side of \eqref{2.6} and \eqref{2.7} converge as $t$ and $s$ approach $\infty$. Thus for any fixed $t>0$ we can choose $b>0$ sufficiently large such that $u(t)>0$.

Next to show that $u(t)$ is increasing. $u(t)$ is increasing in the neighbour of $0$ since $b\geq 0$. Suppose there exists a point $t_2$ so that $u'(t_2)<0$. Then if $u(t_2)>0$, $u(t)$ will obtain a positive local maximum on $(0,x_2)$, violating the equation in \eqref{2.5}. Thus $u(t_2)<0$ is the only possible situation. Obviously, $u(t)$ is negative-valued for all $t\in [0,t_2]$, otherwise we would get another nonnegative local maximum which contradicts again the equation in \eqref{2.5}.

Note that $u(t_1)>0$, we get $t_1>t_2$, combine with the assumption $u'(t_2)<0$ we know $u$ has a negative local minimum in $(t_2,t_1)$. This is impossible in view of the equation in \eqref{2.5}.

Therefore $u(t)$ is increasing for all $t\in [0,\infty)$.

It remains to show that $u(t)$ is strictly increasing. Let $t_3<t_4$ be such that $u(t_3)=u(t_4)$. Then $u(t)=u(t_3)$ for all $t\in (t_3,t_4)$. Then by the uniqueness theorem for solutions to initial value problems of ordinary differential equation we get $u\equiv 0$ for all $t\geq 0$, which is false. Thus $u'(t)>0$ for all $t\in [0,\infty)$.

To show that $\mathcal{B}^{+}$ is open, let $b_0\in \mathcal{B}^{+}$ and use $u(t;b_0)$ to denote the corresponding solution of \eqref{2.5} so that $u'(t;b_0)>0$ for all $t\geq 0$ and $u(t_0;b_0)>0$ for some $t_0>0$. By the continuous dependence theorem for the solutions of the initial value problems of ordinary differential equation, there exists a neighborhood of $b_0$, denote as $U(b_0;\delta_1)(\delta_1>0)$, to make that for any $b\in U(b_0;\delta)$, the solution of \eqref{2.5}, say $u(t,b)$, satisfies $u'(t;b)>0$ for all $t\geq 0$ and $u(t_0;b)>0$ in view of \eqref{2.6}--\eqref{2.7}. Thus $b_0$ is an interior point of $\mathcal{B}^{+}$.

For the statement concerning $\mathcal{B}^{-}$. Let $b=0$, for $t>0$ sufficiently small, we have $u''(t)<0$ from \eqref{2.5}. Thus $u'(t)$ is decreasing and $u'(t)<0$. Let $t_5$ be the point satisfies $u'(t_5)>0$
and $u(t_5)<0$, so there is a point $t_6\in (0,t_5)$ such that $t_6$ is a local minimum of $u(t)$ and $u(t_6)<0$, which violates the equation in \eqref{2.5}. Therefore $u'(t)\leq 0$ for all $t\geq 0$. In fact $u(t)$ is strictly decreasing for all $t>0$. The conclusion can be reached by using the same method as in $\mathcal{B}^{+}$. This proves ${0}\in \mathcal{B}^{-}$, hence $\mathcal{B}^{-}$ is nonempty.

Finally, we need to show the openness to $\mathcal{B}^{-}$. Because of $u'(0)=b>0$, there is a point $t_7>0$ so that $u'(t_7)$=0 and $u''(t_7)<0$. Clearly $u(t_7)<0$, otherwise it violates the equation in \eqref{2.5}. Since there can not be any negative local minimum point, we get that
\be\notag
u'(t)>0,0<t<t_7~~\mbox{ and }~~ u'(t)<0,t>t_7.
\ee
Let $t_8>t_7$, then $u'(t_8;b)<0$. In terms of the continuous dependence of $u(t_8;b)$ on $b$, we can find a neighborhood of $b$, say $U(b;\delta_2)(\delta_2>0)$, so that for any $\widetilde{b}\in U(b;\delta_2)$, there establish $u'(t_8;\widetilde{b})<0$. We arrive at the conclusion that $\widetilde{b}\in \mathcal{B}^{-}$.

The proof of the lemma is complete.
\end{proof}

\begin{lemma}\label{d.3}
The set $\mathcal{B}^{0}$ is closed and nonempty.
\end{lemma}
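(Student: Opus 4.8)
The plan is to obtain both assertions from Lemmas \ref{d.1} and \ref{d.2} by a soft topological argument, with no further analysis of the differential equation \eqref{2.5} required. By Lemma \ref{d.1} the half-line $[0,\infty)$ is the disjoint union $\mathcal{B}^{+}\cup\mathcal{B}^{0}\cup\mathcal{B}^{-}$, hence $\mathcal{B}^{0}=[0,\infty)\setminus(\mathcal{B}^{+}\cup\mathcal{B}^{-})$. Since $\mathcal{B}^{+}$ and $\mathcal{B}^{-}$ are relatively open in $[0,\infty)$ by Lemma \ref{d.2}, their union is open, and therefore $\mathcal{B}^{0}$ is closed in $[0,\infty)$. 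This settles closedness.

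For nonemptiness I would argue by contradiction. Suppose $\mathcal{B}^{0}=\varnothing$. Then $[0,\infty)=\mathcal{B}^{+}\cup\mathcal{B}^{-}$ exhibits the connected set $[0,\infty)$ as a union of two disjoint, nonempty, relatively open subsets, the nonemptiness of each being exactly the content of Lemma \ref{d.2}. This contradicts the connectedness of $[0,\infty)$, so $\mathcal{B}^{0}\neq\varnothing$. (One could also remark that $\sup\mathcal{B}^{-}$ is finite, since large $b$ lies in $\mathcal{B}^{+}$, and that this supremum belongs neither to $\mathcal{B}^{-}$ nor to $\mathcal{B}^{+}$ by openness, giving an explicit element of $\mathcal{B}^{0}$; but the connectedness phrasing is cleaner.)

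I do not expect a genuine obstacle here: once Lemmas \ref{d.1} and \ref{d.2} are in place the statement is a pure connectedness observation. The only point meriting a word of care is the convention that the three shooting sets are defined "wherever the solution exists"; however, membership in $\mathcal{B}^{+}$ or $\mathcal{B}^{-}$ is already detected at a \emph{finite} time $t$, which is precisely what makes the continuous-dependence argument of Lemma \ref{d.2} applicable, so the partition and the openness statements are mutually consistent and the argument runs through unchanged. This lemma, together with the openness of $\mathcal{B}^{\pm}$, will subsequently be used to locate the correct shooting slope $b$ inside $\mathcal{B}^{0}$; the finer claim that this $b$ is unique and yields the $\mathrm{O}(\e^{-t})$ decay will be addressed separately.
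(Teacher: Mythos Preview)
Your proof is correct and follows exactly the paper's approach: the paper's proof consists of the single sentence ``This consequence is a direct conclusion note the connectedness of $[0,\infty)$,'' which is precisely the argument you have spelled out in detail using Lemmas~\ref{d.1} and~\ref{d.2}. Your version is more careful (separating the closedness from the nonemptiness and making the contradiction with connectedness explicit), but the underlying idea is identical.
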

\begin{proof}
This consequence is a direct conclusion note the connectedness of $[0,\infty)$.
\end{proof}

\begin{lemma}\label{d.4}
There is only one point in $\mathcal{B}^{0}$.
\end{lemma}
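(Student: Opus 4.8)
The plan is to establish the ``at most one point'' half of the statement (nonemptiness having just been obtained in Lemma~\ref{d.3}) by a comparison argument in the initial slope $b$, the key structural fact being that the right-hand side $g(u):=(\e^{u}-1)/\sqrt{1-\frac{\beta}{4}(\e^{u}-1)^{2}}$ of \eqref{2.5} is strictly increasing in $u$ on $(-\infty,0]$. I would check this via the substitution $v=\e^{u}-1\in(-1,0]$: then $g=v/\sqrt{1-\frac{\beta}{4}v^{2}}$, whose $v$-derivative equals $(1-\frac{\beta}{4}v^{2})^{-3/2}>0$ (the radicand staying positive because $\beta<4$), while $v$ is increasing in $u$.

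Now suppose $b_{1}<b_{2}$ both belong to $\mathcal{B}^{0}$, with corresponding solutions $u_{1},u_{2}$ of \eqref{2.5}. By the definition of $\mathcal{B}^{0}$ each $u_{i}$ is increasing and stays strictly below $0$, hence remains in the bounded set between $u_{i}(0)$ and $0$ on which $g$ is bounded; thus $u_{i}$ exists for all $t\ge 0$ and is bounded on $[0,\infty)$. Set $w=u_{2}-u_{1}$, so $w(0)=0$ and $w'(0)=b_{2}-b_{1}>0$, whence $w>0$ for small $t>0$. I claim $w>0$ on all of $(0,\infty)$: if $t^{*}>0$ were its first zero, then on $(0,t^{*})$ we have $u_{2}>u_{1}$, so $w''=g(u_{2})-g(u_{1})>0$ by the monotonicity of $g$, so $w'$ is strictly increasing and stays above $w'(0)>0$, forcing $w$ to be strictly increasing on $[0,t^{*}]$ and contradicting $w(t^{*})=0$. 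Therefore $w''>0$ throughout $(0,\infty)$, giving $w'(t)>b_{2}-b_{1}$ and hence $w(t)>(b_{2}-b_{1})\,t$ for all $t>0$, so $w(t)\to\infty$. This contradicts the boundedness of $w=u_{2}-u_{1}$ on $[0,\infty)$. Hence $\mathcal{B}^{0}$ contains no two distinct points, and together with Lemma~\ref{d.3} it is a singleton.

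If one additionally wants to identify this slope, I would multiply \eqref{2.5} by $u'$ and integrate to get a first integral $\tfrac12(u')^{2}-G(u)=\tfrac12 b^{2}$ with $G(u)=\int_{u(0)}^{u}g(s)\,\dd s$; for $b\in\mathcal{B}^{0}$ the solution, being increasing and bounded above by $0$, tends to a limit $L\le 0$, and $L=0$ is forced (otherwise $u''\to g(L)<0$ would eventually make $u'$ negative), after which $u'\to 0$ as well, and the conserved quantity pins $b$ down uniquely (for the boundary condition \eqref{2.2} this is the value appearing in \eqref{2.19}).

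The main obstacle I anticipate is purely a matter of bookkeeping: justifying that the comparison manipulations are legitimate on the interval of existence. For $b_{1},b_{2}\in\mathcal{B}^{0}$ this dissolves, because the solutions are confined to a compact sub-interval of $(-\infty,0]$ on which $g$ is smooth (again using $\beta<4$), so no finite-time breakdown occurs and the standard comparison and uniqueness theorems for ODEs apply directly; the only genuine input is the monotonicity of $g$, everything else being the usual phase-plane reasoning already used for $\mathcal{B}^{\pm}$.
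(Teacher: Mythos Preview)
Your argument is correct, but it differs from the paper's. The paper also sets $w=u(t;b_{1})-u(t;b_{2})$, observes $w(0)=w(\infty)=0$, linearizes via the mean value theorem to get
\[
w''=\frac{\e^{\xi}}{\bigl(1-\tfrac{\beta}{4}(\e^{\xi}-1)^{2}\bigr)^{3/2}}\,w,
\]
and then invokes the maximum principle (the coefficient being positive) to conclude $w\equiv 0$. So the paper's route is ``two-point boundary data plus maximum principle,'' whereas yours is ``one-point initial data plus monotone comparison forcing linear growth of $w$, contradicting boundedness.'' Your version has the advantage of being fully self-contained here: it only uses that solutions in $\mathcal{B}^{0}$ are bounded (immediate from the definition), while the paper's use of $w(\infty)=0$ tacitly relies on $u(t;b_{i})\to 0$, which is only established afterward in Lemma~\ref{d.5}. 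On the other hand, once that limit is in hand the paper's maximum-principle step is a one-liner. Your extra paragraph on the first integral is a nice add-on but goes beyond what the lemma asserts; the paper defers that computation to the discussion around \eqref{2.9} and \eqref{2.19}.
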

\begin{proof}
Let $b_1,b_2\in \mathcal{B}^{0}$, $u(t;b_1)$ and $u(t;b_2)$ are two solutions of \eqref{2.5}, then $w(t)=u(t;b_1)-u(t;b_2)$ satisfies the boundary condition $w(0)=w(\infty)=0$, and the equation
\be\notag
w''(t)=\frac{\e^\xi}{\big(1-\frac{\beta}{4}(\e^\xi-1)^2\big)^{\frac{3}{2}}}w(t),
\ee
where $\xi$ lies between $u(t;b_1)$ and $u(t;b_2)$. Hence the maximum principle shows that $w(t)=0$ everywhere. Therefore, we have $b_1=b_2$.
\end{proof}

\begin{lemma}\label{d.5}
Let $b\in\mathcal{B}^{0}$, we have $u(t)\to 0$ as $t\to\infty$.
\end{lemma}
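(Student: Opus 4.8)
The plan is to exploit the monotonicity and concavity that the equation forces on $u$. Since $b\in\mathcal{B}^{0}$, on the maximal interval of existence $[0,T)$ we have $u'(t)>0$ and $u(t)<0$. From $u(t)<0$ we get $\e^{u(t)}-1\in(-1,0)$, so, using $\beta<4$,
\be\notag
0<\frac{4-\beta}{4}\le 1-\frac{\beta}{4}(\e^{u(t)}-1)^2\le 1 .
\ee
Hence the right-hand side of \eqref{2.5} is negative and bounded in absolute value, so $u''<0$ and $0<u'(t)\le u'(0)=b$ throughout $[0,T)$. Since $u$ is increasing and bounded above by $0$, both $u$ and $u'$ stay in a compact set, so the standard continuation theorem gives $T=\infty$: the solution exists on all of $[0,\infty)$.

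With global existence settled, $u$ is increasing and bounded above by $0$, hence $u(t)\to L$ for some $L\le 0$ as $t\to\infty$; it remains to exclude $L<0$. Suppose $L<0$. As $u$ increases to $L$, we have $u(t)\le L<0$ for every $t\ge 0$, so $1-\e^{u(t)}\ge 1-\e^{L}>0$; combining this with $\sqrt{1-\frac{\beta}{4}(\e^{u(t)}-1)^2}\le 1$ yields
\be\notag
u''(t)=\frac{\e^{u(t)}-1}{\sqrt{1-\frac{\beta}{4}(\e^{u(t)}-1)^2}}\le-(1-\e^{L})<0,\qquad t\ge 0 .
\ee
Integrating over $[0,t]$ gives $u'(t)\le b-(1-\e^{L})\,t\to-\infty$ as $t\to\infty$, contradicting $u'(t)>0$. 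Therefore $L=0$, i.e. $u(t)\to 0$ as $t\to\infty$.

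The argument is entirely elementary, so there is no real obstacle; the one spot deserving care is the global-existence step, which is why I would dispose of it first. An equivalent way to finish is to note that $u'$ is monotone and bounded, so $\int_{0}^{\infty}|u''(t)|\,\dd t=u'(0)-\lim_{t\to\infty}u'(t)<\infty$, whereas the lower bound $|u''(t)|\ge 1-\e^{L}$ would force this integral to diverge unless $L=0$.
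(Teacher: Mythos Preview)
Your argument is correct and is essentially the same as the paper's: both observe that $u$ is increasing and bounded, pass to a limit $L\le 0$, and if $L<0$ bound $u''$ above by a negative constant so that integrating forces $u'$ to become negative, contradicting $b\in\mathcal{B}^{0}$. Your write-up is in fact a bit more careful than the paper's, since you first dispose of the global-existence issue (which the paper's definition of $\mathcal{B}^{0}$ leaves open) and you make the bound $u''(t)\le -(1-\e^{L})$ fully explicit rather than invoking an unspecified $\varepsilon$.
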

\begin{proof}
Since $b\in\mathcal{B}^{0}$, there is $u'(t)>0$ and $u(t)<0$ for all $t\geq 0$. Thus there exists the limit of $u(t)$, say $u_0$, as $t\to\infty$, and $-\infty<u_0\leq 0$. If $u_0<0$, then we have $u(t)<u_0$ for all $t>0$. Note the equation in \eqref{2.5}, we can find $t_0>0$ sufficiently large so that $u''(t)<-\varepsilon<0$ for any $\varepsilon>0$ and $t\geq t_0$. In particular, $u'(t)<u'(t_0)-\varepsilon(t-t_0)$ for all $t>t_0$. It will lead to a contradiction when $t$ is sufficiently large because $u'(t)>0$ for all $t\geq 0$. Therefore $u(t)\to 0$ as $t\to\infty$.
\end{proof}

Returning to the original variable $x=-t$, we arrive at the following expression for the solution of \eqref{2.4}
\be\notag
u(x)\to 0, x\to -\infty;~~~u'(x)<0, u(x)<0,-\infty<x\leq 0.
\ee
Since the initial value $a,b>0$ and $u''(x)<0$ for all $x>0$. These properties make $u(x)<0, u'(x)<0$ and $u''(x)<0$ establish when $x>0$. Such a fact leads to a direct consequence
\be\label{2.8}
\lim_{x\to\infty}u(x)=\lim_{x\to\infty}u'(x)=\lim_{x\to\infty}u''(x)=-\infty.
\ee
In fact, we have the following more accurate estimates.

\begin{proposition}\label{pro2.2}
The solution of \eqref{2.4} has the sharp decay estimates $u(x)=-\frac{x^2}{2}+\mbox{\rm O}(1), x\to \infty$ and $u(x)=\mbox{\rm O}(\e^{-|x|}), x\to -\infty$.
\end{proposition}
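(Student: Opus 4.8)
The plan is to pass to a first integral of \eqref{2.1} and extract both tails from it. Multiplying \eqref{2.1} by $u'$, integrating, and using the boundary behaviour $u(x)\to 0$ and $u'(x)\to 0$ as $x\to-\infty$ (the first is Lemma~\ref{d.5}; the second then follows since $u''$ is bounded near $-\infty$ by \eqref{2.1}), the integration constant vanishes and one obtains
\be\notag
\frac12\bigl(u'(x)\bigr)^2=G\bigl(u(x)\bigr),\qquad G(u):=\int_0^u\frac{\e^s-1}{\sqrt{1-\frac{\beta}{4}(\e^s-1)^2}}\,\dd s .
\ee
The integrand is negative on $(-\infty,0)$, so $G>0$ there; since $u$ is strictly decreasing we may take the branch $u'=-\sqrt{2G(u)}$, an autonomous first-order equation. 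Both estimates now reduce to the behaviour of $G$ near the ends $u=0^-$ and $u\to-\infty$.

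Near $x=-\infty$ (so that $u\to 0^-$): a Taylor expansion at $u=0$ gives $G(u)=\frac12 u^2\bigl(1+\mathrm{O}(|u|)\bigr)$, hence $u'=u\bigl(1+\mathrm{O}(|u|)\bigr)$, i.e. $(\ln|u|)'=u'/u\to 1$. A first, crude integration on a left half-line yields the a priori bound $|u(x)|\le C\,\e^{x/2}$; substituting this back makes $\int_{-\infty}^{x}\bigl((\ln|u|)'(s)-1\bigr)\,\dd s=\int_{-\infty}^{x}\mathrm{O}(|u(s)|)\,\dd s$ absolutely convergent, so $\ln|u(x)|=x+\mbox{const}+\mathrm{o}(1)$. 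Hence $u(x)=\mathrm{O}(\e^{x})=\mathrm{O}(\e^{-|x|})$ as $x\to-\infty$, with the sharp exponent $1$ (indeed $\e^{-x}u(x)$ has a nonzero limit).

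Near $x=+\infty$ (so that $u\to-\infty$; the solution is global because the right-hand side of \eqref{2.1} stays bounded for $u<0$): as $s\to-\infty$ the integrand of $G$ tends to a negative constant $-\mu$, $\mu>0$, with difference $\mathrm{O}(\e^{s})$, so that $G(u)=\mu|u|+\mathrm{O}(1)$. Then $u'=-\sqrt{2\mu|u|}\,\bigl(1+\mathrm{O}(|u|^{-1})\bigr)$, and the substitution $w=\sqrt{-u}$ turns this into $w'=\sqrt{\mu/2}+\mathrm{O}(w^{-2})$. Since $w(x)\sim\sqrt{\mu/2}\,x$, the perturbation is integrable at $+\infty$, whence $w(x)=\sqrt{\mu/2}\,x+\mbox{const}+\mathrm{o}(1)$ and $u(x)=-w(x)^2=-\frac{\mu}{2}x^2+\mathrm{O}(x)$; evaluating $\mu$ gives the quadratic decay of the statement. (One may also bypass the first integral here, using directly that $u''$, being the right-hand side of \eqref{2.1}, tends to $-\mu$ with difference $\mathrm{O}(\e^{u})$, smaller than any power of $x$, and then integrating twice.)

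The main obstacle is the $x\to-\infty$ tail: obtaining the \emph{sharp} rate $\e^{-|x|}$, rather than merely $\e^{-c|x|}$ for some $c\in(0,1)$, forces the two-step scheme above --- a rough a priori bound followed by a refined integration that exploits the integrability of the remainder $(\ln|u|)'-1=\mathrm{O}(|u|)$ furnished by the first step. This exponential decay is precisely the hypothesis invoked in the proof of Lemma~\ref{d.2}, so its verification here also closes that loop. The $x\to+\infty$ tail is comparatively routine once $u\to-\infty$ is known; the only delicate points are the change of variables $w=\sqrt{-u}$ that linearises the first-order equation and the check that the error $\mathrm{O}(w^{-2})$ is integrable at infinity.
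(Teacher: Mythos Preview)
Your approach is essentially the paper's: both multiply \eqref{2.1} by $u'$, integrate to the first integral $(u')^2=2G(u)$, and then read off each tail from the behaviour of $G$ near $u=0$ and near $u=-\infty$. The paper executes this via explicit sandwich inequalities \eqref{2.10}--\eqref{2.11} and \eqref{2.16}--\eqref{2.17}, while you use Taylor expansions and a bootstrap; the content is the same. Your treatment of the $x\to-\infty$ tail is actually sharper than the paper's: your two-step scheme gives $\e^{-x}|u(x)|\to C\neq 0$, whereas the paper's inequality \eqref{2.17} only yields $\e^{x}\lesssim|u(x)|\lesssim \e^{x/\alpha}$ with $\alpha=(4-\beta)^{1/4}/\sqrt{2(1-\varepsilon)}$ approaching $1$ as $\varepsilon\to0$.

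One point deserves a flag. At $x\to+\infty$ you write ``evaluating $\mu$ gives the quadratic decay of the statement'', but in fact
\[
\mu=\lim_{s\to-\infty}\frac{1-\e^{s}}{\sqrt{1-\tfrac{\beta}{4}(\e^{s}-1)^2}}=\frac{2}{\sqrt{4-\beta}},
\]
so your argument produces $u(x)=-\dfrac{x^2}{\sqrt{4-\beta}}+\mathrm{O}(x)$, which matches $-\tfrac{x^2}{2}$ only when $\beta=0$; moreover your remainder is $\mathrm{O}(x)$, not $\mathrm{O}(1)$. This is not a defect of your method: the paper's own bounds \eqref{2.14} likewise give only $\tfrac{x^2}{2}+\mathrm{O}(x)\le -u(x)\le \tfrac{x^2}{\sqrt{4-\beta}}+\mathrm{O}(x)$ and do not isolate the leading coefficient $-\tfrac12$ either. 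The mismatch lies in the stated form of \eqref{2.15}, not in either proof.
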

\begin{proof}
In view of the boundary condition $u(-\infty)=0$ we get $u'(-\infty)=0$. Multiplying the equation \eqref{2.1} by $u'$ and integrating over $(-\infty,x)$, we arrive at
\be\label{2.9}
(u')^2=\int_{0}^{u(x)}\frac{2(\e ^u-1)\dd u}{\sqrt{1-\frac{\beta}{4}(\e ^u-1)^2}}.
\ee
For any $u<0$, there holds $-1<\e ^u-1<0$, so we have
\be\label{2.10}
\frac{2(\e ^u-1)}{\sqrt{4-\beta}}<\frac{(\e ^u-1)}{\sqrt{1-\frac{\beta}{4}(\e ^u-1)^2}}<(\e ^u-1).
\ee
Inserting \eqref{2.10} into \eqref{2.9}, we obtain
\be\label{2.11}
2(\e ^u-u-1)<u'^2<\frac{4(\e ^u-u-1)}{\sqrt{4-\beta}}.
\ee
Let us first consider the behavior of $u(x)$ as $x\to \infty$. Note that $u'(x)<0$ for all $x\in (-\infty,\infty)$, we can rewrite \eqref{2.11} as
\be\notag
-\frac{2\sqrt{\e ^u-u-1}}{(4-\beta)^{\frac{1}{4}}}<u'<-\sqrt{2(\e ^u-u-1)}.
\ee
Suppose $x_0$ be the point such that $u_0=u(x_0)<-1$. Therefore, we get the integral
\be\label{2.12}
-\frac{2(x-x_0)}{(4-\beta)^{\frac{1}{4}}}<\int_{u_0}^{u(x)}\frac{\dd u}{\sqrt{\e ^u-u-1}}<-\sqrt 2(x-x_0),~~x>x_0.
\ee
Using $-u-1<\e ^u-u-1<-u$ in \eqref{2.12} we have
\be\label{2.14}
\sqrt{-u}-\sqrt{-u_0}<\frac{x-x_0}{(4-\beta)^{\frac{1}{4}}}~~\mbox{and}~~\sqrt{2}(x-x_0)<2(\sqrt{-u-1}-\sqrt{-u_0-1}),
\ee
which lead to the sharp asymptotic estimate \eqref{2.15}.

In a similar way, we can estimate the behavior of $u(x)$ as $x\to -\infty$. For this reason, using the fact $u(-\infty)=0$ and $u'(x)<0$, we can choose $x_0$ sufficiently negative such that $\e^{u(x)}>1-\varepsilon$, for any $\varepsilon\in(0,1)$ and $x<x_0$. Hence, we get the inequality
\be\label{2.16}
(1-\varepsilon)u^2(x)<2\left(\e^{u(x)-u(x)-1}\right)<u^2(x),~~x<x_0.
\ee
Inserting this into \eqref{2.12}, we arrive at
\be\label{2.17}
\frac{(4-\beta)^{\frac{1}{4}}}{\sqrt{2(1-\varepsilon)}}\ln\left|\frac{u}{u_0}\right|<x-x_0<\ln\left|\frac{u}{u_0}\right|,
\ee
Since $\varepsilon>0$ may be arbitrarily small, we deduce the sharp asymptotic estimate \eqref{2.18}.
\end{proof}

In view of the study of the equation \eqref{2.1} with boundary \eqref{2.2}, we may construct solutions of the equation \eqref{2.1} satisfying boundary condition \eqref{2.3}. In fact, such a solution may have a global maximum point. By translation invariance of \eqref{2.1}, we assume the maximum point of $u(x)$ is $0$ and $u(0)=u_0$, so $u''(0)\leq 0$. In view of \eqref{2.1}, we have $u_0\leq 0$. Consequently, $u(x)\leq 0$ for all $x$. Therefore, using $u'(0)=0$ and multiplying \eqref{2.1} by $u'$, integrating around $x=0$, we are lead to
\be\label{2.19}
(u')^2=\int_{u_0}^{u(x)}\frac{2(\e ^u-1)\dd u}{\sqrt{1-\frac{\beta}{4}(\e ^u-1)^2}}.
\ee
Since the right-hand side of \eqref{2.19} decreases about $u$, thus the integral is positive for all $u<u_0$. In other words, the right-hand side of \eqref{2.19} keeps positive for $x\neq 0$. Note the boundary condition $u(\infty)=-\infty$, so we get the inequality
\be\label{2.20}
-\frac{2x}{(4-\beta)^{\frac{1}{4}}}<\int_{u_0}^{u(x)}\frac{\dd u}{\sqrt{\e ^u-u-1}}<-\sqrt 2x,~~x>0.
\ee
For the part in $x<0$, since the solution goes to $-\infty$ as $x\to-\infty$, we may flip the solution in $x>0$ by setting $x\mapsto-x$ to obtain the solution with $x<0$. In this way, we get a solution of \eqref{2.1} which satisfies the boundary condition \eqref{2.3}.

\medskip

In conclusion, we have shown that the Abelian Higgs model subject to the Born--Infeld theory of electrodynamics has a unique domain wall solution which may be obtained by finding a correct initial slope of the solution to the initial value problem associated with the quasi-linear differential equation. This implies there is only one way to achieve phase transition between the superconducting and normal states corresponding to the boundary condition $f(-\infty)=1$ and $f(\infty)=0$ and between the normal and normal states corresponding to the boundary condition $f(-\infty)=f(\infty)=0$. Furthermore, some asymptotic estimates for the solution are also obtained. It is worth noting that when the function \eqref{2.1} satisfies boundary condition \eqref{2.2}, we get the estimate \eqref{2.18} which confirms our hypothesis in Lemma \ref{d.2} is reasonable.

\medskip

{\bf Acknowledgments.}
The authors would like to thank Professor Yisong Yang for bringing us to this problem. This work was supported by NSFC-12101197 and the 72nd batch of China Postdoctoral Science Foundation. No. 2022M721022.

\end{document}